\pdfoutput=1
\documentclass[a4paper,12pt]{amsart}
\usepackage{amsmath,amssymb,cite}

\newcommand{\R}{{\mathbb R}}
\newcommand{\Z}{{\mathbb Z}}
\newcommand{\Cpx}{{\mathbb C}}
\newcommand{\V}{\mathcal V}

\newcommand{\image}{\mathrm{Im}}
\newcommand{\diag}{\mathrm{diag}}
\newcommand{\tr}{\mathrm{trace}}
\newcommand{\cl}{\mathrm{cl}}

\newtheorem{Theorem}{Theorem}[section]

\newtheorem{Proposition}[Theorem]{Proposition}

\theoremstyle{definition}
\newtheorem{Definition}[Theorem]{Definition}
\newtheorem{Example}[Theorem]{Example}

\theoremstyle{remark}
\newtheorem{Remark}[Theorem]{Remark}

\title[Nakayama topos and Bayesian quantum computing]{Nakayama's reduction of quantum topos and Bayesian quantum computing}

\author{Atsuhide Mori}
\address{Department of Mathematics, Osaka Dental University}
\email{mori-a@cc.osaka-dent.ac.jp}

\subjclass[2020]{Primary 81P13; Secondary 81P68, 18B25, 46L10, 62C10}
\keywords{topos quantum theory, spectral presheaf, context selection,
quantum contextuality, Bayesian updating, Heisenberg picture, Pauli context,
Clifford group, generalized quadrangle}

\begin{document}

\begin{abstract}
Nakayama found the way to limit the contexts in the D\"oring-Isham topos quantum theory and unified such limitation and the definition of probability as a single choice of topology. We explain how this unification supports the Bayesian perspective where the probability changes as data is obtained. We describe this change as a local shift of the applicability of a predictive theory and apply it to quantum computing.  
\end{abstract}

\maketitle

\section{Introduction}
Quantum theory starts by setting a Hilbert space, but then soon the mathematical interest turns to the non-commutative algebra itself, say $A$, realizable as a system of operators on such space. This is also the physics motivation for considering the Heisenberg picture. The topos quantum theory of D\"oring and Isham \cite{DI1,DI2,DI3,DI4,D5,DI6} deals with the poset $\V_A$ of commutative von Neumann subalgebras of $A$ to understand how $A$ allows quantum contextuality including Bell non-locality. 

A topos theory is a parametric version of set theory. The poset $\V_A$ serves as the parameter space for the quantum topos. This meta-theory addresses a system without exterior including observers, 
but any proper theory must have its limit of applicability. 
Nakayama \cite{Nakayama1,Nakayama2,Nakayama3} showed that such a limit appears in a coarse-graining of the topos theory. 
Namely, he set up an appropriate topology on the quantum topos to a given subset $S$ of $A$ so that $S$ becomes barely available in the corresponding coarse-graining of the topos. 
Furthermore, he mixed it with the other topology on the topos with base poset $\V_A\times[0,1]$ which defines the probability in the D\"oring-Isham theory, meaning that {\em selecting contexts for thought and thinking in terms of certainty are integrated in reducing the meta-theory into a predictive theory}. 

In this article, we incorporate a Bayesian perspective into Nakayama's approach: {\em Any probability is ever-changing, and therefore the selection of contexts is also a subject to change}. We describe how a portion of the selected contexts becomes a data and excluded from the next selection. We are not attempting to subjectivize the notion of probability like the QBism does (see \cite{Critique} for critiques), but rather propose a possible mechanism by which variables become constants as the applicability of the theory locally shifts. In a certain extent, Nakayama's original work naturally includes this perspective, hence we supplement this point in the overview of his theory in \S 3.1. Following this in \S 3.2, we propose the Bayesian perspective as a framework in the Heisenberg picture. In practical level, we apply the theory to quantum computing in \S 4. A brief preliminary on quantum computing is provided in \S 2 to share the notation with readers unfamiliar with this topic. 

\section{Preliminary on quantum computing}
For details on this section, we refer the readers to a textbook on quantum computing such as \cite{NC}. Let $P_1$ be the original Pauli group generated by the Pauli matrices 
\[
\sigma_0=\begin{pmatrix}1&0\\0&1\end{pmatrix}, \quad
\sigma_1=\begin{pmatrix}0&1\\1&0\end{pmatrix}, \quad
\sigma_2=\begin{pmatrix}0&-i\\i&0\end{pmatrix}, \quad
\sigma_3=\begin{pmatrix}1&0\\0&-1\end{pmatrix}.
\] 
They are all unitary and Hermitian (hence their squares are equal to $\sigma_0$ which is the unit) and satisfy the formulas
\[
\sigma_1\sigma_2=-\sigma_2\sigma_1,\quad 
\sigma_2\sigma_3=-\sigma_3\sigma_2,\quad
\sigma_3\sigma_1=-\sigma_1\sigma_3,\quad
\sigma_1\sigma_2\sigma_3=i\sigma_0.
\]
Thus, each element of $P_1$ is presented by  
$i^k{\sigma_3}^p{\sigma_1}^q$ ($k\in \Z_4, p,q\in\Z_2$). 
The (extended) Pauli group $P_n$ is the $n$-fold tensor product of $P_1$, which consists of $2^{2n+2}$ matrices of size $2^n \times 2^n$. Here we treat the tensor product of two matrices $a$ and $b$ with arbitrary sizes as the Kronecker product $a\otimes b=\begin{pmatrix}a_{11}&\cdots&a_{1c}\\\vdots&&\vdots\\ a_{r1}&\cdots&a_{rc}\end{pmatrix}\otimes b=\begin{pmatrix}a_{11}b&\cdots&a_{1c}b\\\vdots&&\vdots\\ a_{r1}b&\cdots&a_{rc}b\end{pmatrix}$, i.e., the block matrix consisting of the scalar multiples of the matrix $b$ by the entries of $a$. Note that the matrix product $(a\otimes b)(a'\otimes b')$ can be calculated as $(aa')\otimes (bb')$ provided that both $aa'$ and $bb'$ are defined. The projectivization $P_n/\Z_4$ is of order $2^{2n}$. The quantum Clifford group $C_n$ is the normalizer of $P_n$ in the unitary group $\mathrm{U}(2^n)$, i.e., 
\[
C_n:=N(P_n)=\{x\in \mathrm{U}(2^n)\mid x P_n x^*=P_n\}(\rhd P_n)
\]
and $C_n/\mathrm{U}(1)$ its projectivization. 
The quotient $(C_n/\mathrm{U}(1))/(P_n/\Z_4)$ is isomorphic to the symplectic group $\mathrm{Sp}(2n,\Z_2)$ since each element of $P_n/\Z_4$ can be written in the form  $\sigma_{p,q}:={\sigma_3}^{p_1}{\sigma_1}^{q_1}\otimes\cdots\otimes{\sigma_3}^{p_n}{\sigma_1}^{q_n}$ ($p,q \in {\Z_2}^n$) where the symplectic form over $\Z_2$ is 
\[
\Omega_{p,q,r,s}=\begin{pmatrix}p^T&q^T\end{pmatrix}\begin{pmatrix}0&\mathrm{id}\\\mathrm{id}&0\end{pmatrix}\begin{pmatrix}r\\s \end{pmatrix}(\Leftrightarrow\sigma_{p,q}\sigma_{r,s}=(-1)^{\Omega_{p,q,r,s}}\sigma_{r,s}\sigma_{p,q}).
\]
Since there are $(2^{2n}-1)\cdot 2^{2n-1}$ choices for the first pair of symplectic basis, we have the cardinality formula
\[
|C_n/\mathrm{U}(1)|=|P_n/\Z_4|\cdot|\mathrm{Sp}(2n,\Z_2)|=\displaystyle 2^{2n+n^2}\prod_{j=1}^n(2^{2j}-1),
\]
e.g., $|C_1/\mathrm{U}(1)|=24$, $|C_2/\mathrm{U}(1)|=24\cdot4\cdot2^3\cdot(2^4-1)=11520$. 

In quantum computing, quantum gates and their series circuits are represented by unitary matrices and their matrix products, namely, an $n$-qubit gate is an element of $\mathrm{U}(2^n)$. We consider the tensor product of an $m$-qubit gate and an $n$-qubit gate as an $(m+n)$-qubit gate. We can embed a non-trivial Pauli matrix $\sigma_j$ ($j=1,2,3$) as an $n$-qubit gate ${\sigma_0}^{\otimes(k-1)}\otimes\sigma_j\otimes {\sigma_0}^{\otimes(n-k)}$ for $k=1,\dots, n$. Any such gate is called Pauli gate. Similarly, we define Hadamard gate by embedding the H-matrix 
\[
h=\frac{\sigma_1+\sigma_3}{\sqrt{2}}
=\frac{1}{\sqrt{2}}\begin{pmatrix}1&1\\1&-1\end{pmatrix}=h^*\in C_1\subseteq \mathrm{U}(2),
\]
to $C_n$ in $n$ ways, phase gate by embedding the S-matrix
\[
s=\frac{\sigma_0+\sigma_3}{2}+i\frac{\sigma_0-\sigma_3}{2}=\begin{pmatrix}1&0\\0&i\end{pmatrix}=\sigma_3 s^*\in C_1\subseteq \mathrm{U}(2),
\]
in $n$ ways and CNOT gate, provided that $n>1$, by embedding the C-matrix
\[
c=\frac{1}{2}\sum_{u,v\in \Z_2}(-1)^{uv}{\sigma_3}^u \otimes {\sigma_1}^v=\begin{pmatrix}1&0&0&0\\0&1&0&0\\0&0&0&1\\0&0&1&0\end{pmatrix}=c^*\in C_2\subseteq \mathrm{U}(4)
\]
in $n-1$ ways. The following calculation shows that these specific gates belong to $C_n$. $h\sigma_1=\sigma_3 h$, $h\sigma_2=-\sigma_2h$, $s\sigma_1=\sigma_2 s$, $s\sigma_2=-\sigma_1 s$, $s\sigma_3=\sigma_3s$, 
$c(\sigma_1\otimes \sigma_3)=-(\sigma_2\otimes \sigma_2)c$, and $c(\sigma_j\otimes \sigma_k)=(\sigma_l\otimes\sigma_m)c$ for 
$(j,k;l,m)=(0,0;0,0)$, $(0,1;0,1)$, $(0,2;3,2)$, $(0,3;3,3)$, $(1,0;1,1)$, 
$(1,2;2,3)$, $(2,0;2,1)$, $(3,0;3,0)$, $(3,1;3,1)$.\par
The S-matrix can also be considered as the $\displaystyle \frac{\pi}{4}$-matrix $\displaystyle r\left(\frac{\pi}{2}\right)$, which is an example of the following $\displaystyle\frac{\theta}{2}$-matrix $r(\theta)$. 
\[
r(\theta)=\begin{pmatrix}1&0\\0&\exp(i\theta)\end{pmatrix}=\exp(i \theta/2)\begin{pmatrix}\exp(-i\theta/2)&0\\0&\exp(i\theta/2)\end{pmatrix}
\]
Now we summarize the fundamental results in quantum computation.
\begin{Theorem}\begin{enumerate}
\item[i)](Gottesman-Knill \cite{Got})\quad
The group $C_n/\mathrm{U}(1)$ is the quotient of the group generated by Hadamard, phase, and CNOT gates under the $\Z_8$-action on the phase factor. Clifford circuits with computational-basis inputs and Pauli measurements admit efficient classical simulation. 
\item[ii)](Boykin et al. \cite{Boykin})\quad 
Adding the square root of phase gate defined by the T-matrix 
\[
t=\begin{pmatrix}1&0\\0&\exp(i\pi/4)\end{pmatrix}(=r(\pi/4):~\textrm{the $\displaystyle\frac{\pi}{8}$-matrix}~)
\]
to the above generators, we can approximate any element of $\mathrm{U}(2^n)$ arbitrarily well up to a global phase.  
\end{enumerate}\label{qubits}
\end{Theorem}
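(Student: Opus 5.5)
For part i) the plan has two halves: first show that Hadamard, phase and CNOT gates generate $C_n$ modulo phases, then derive the classical simulation from the Heisenberg-picture action on Paulis.

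\textbf{Generation.} Let $G$ be the group generated by the embedded $h$, $s$ and $c$. The relations listed above show $G\subseteq C_n$. By the exact sequence
\[
1\to P_n/\Z_4\to C_n/\mathrm{U}(1)\to \mathrm{Sp}(2n,\Z_2)\to 1,
\]
it suffices to check two things.
\begin{enumerate}
\item[(a)] $G$ contains $P_n$ up to phase. This holds because $s^2=\sigma_3$ and $h\sigma_3h=\sigma_1$.
\item[(b)] The image of $G$ in $\mathrm{Sp}(2n,\Z_2)$ is all of $\mathrm{Sp}(2n,\Z_2)$.
\end{enumerate}
For (b), I would first read off the symplectic transvections induced by the gates. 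Hadamard swaps $p_k$ and $q_k$. Phase adds $p_k$ to $q_k$ (or the reverse, depending on convention). CNOT acts on $(p,q)$ by an elementary row operation together with its inverse transpose. The group $\mathrm{Sp}(2n,\Z_2)$ is generated by transvections $v\mapsto v+\Omega(v,w)w$. I would show that every transvection is realized by conjugating a single-qubit phase transvection. Choose $g\in G$ sending $w$ to the vector of $\sigma_3$ on the first qubit. This is possible because $G$ acts transitively on nonzero vectors: by induction on the number of nontrivial tensor factors, CNOTs and local gates can reduce any nonzero $\sigma_{p,q}$ to a single-qubit Pauli.

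The $\Z_8$ phase statement follows from the Hadamard–phase identity $(hs)^3=e^{i\pi/4}\sigma_0$. This shows that global phases in $\langle e^{i\pi/4}\rangle$ arise and are the only ones, since the entries of the generators lie in $\Z[e^{i\pi/4}]/\sqrt2^{\,k}$.

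\textbf{Simulation.} I would use the stabilizer formalism. The input $|0\cdots0\rangle$ is the unique joint $+1$ eigenvector of the stabilizer generators $Z_1,\dots,Z_n$, where $Z_k$ denotes $\sigma_3$ embedded on the $k$th qubit. Each gate conjugates these $n$ generators by the table above, at cost $O(n)$ per generator. This amounts to updating a $2n\times n$ binary tableau plus sign bits.

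For a measurement of a Pauli $\sigma$, there are two cases. If $\sigma$ commutes with all generators, the outcome is deterministic: $\pm\sigma$ lies in the stabilizer group, and the sign is found by Gaussian elimination over $\Z_2$. Otherwise the outcome is uniformly random, and the tableau is updated by multiplying the anticommuting generators together and replacing one of them by $\pm\sigma$. All steps are polynomial in $n$ and in the circuit length.

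\textbf{Part ii).} I would proceed in three steps.
\begin{enumerate}
\item Any unitary is a product of two-level unitaries (Givens reduction), and hence of single-qubit gates and CNOTs. This is the standard universality of CNOT plus $\mathrm{U}(2)$.
\item It then suffices to approximate every element of $\mathrm{SU}(2)$ by words in $h$ and $t$. Up to phase, $t$ is the rotation $R_z(\pi/4)$ and $hth$ is $R_x(\pi/4)$. Their product is a rotation about an axis $\vec n$ by an angle $\theta$ with $\cos(\theta/2)=\cos^2(\pi/8)$.
\item Show $\theta/\pi$ is irrational. Powers of this rotation are then dense in the one-parameter subgroup about $\vec n$, and conjugating by $h$ gives a second, non-parallel axis. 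Two such dense circle subgroups generate a dense subgroup of $\mathrm{SU}(2)$ via Euler decomposition.
\end{enumerate}
The main obstacle is this irrationality. I would argue that $e^{i\theta/2}$ is a root of a polynomial over $\mathbb{Q}$ that is not cyclotomic. Here $\cos(\theta/2)=(2+\sqrt2)/4$. This is an algebraic integer only if its norm $(2+\sqrt2)(2-\sqrt2)/16=1/8$ is an integer, which it is not. Hence it is not twice-halved a root-of-unity trace, so $\theta$ is not a rational multiple of $\pi$.
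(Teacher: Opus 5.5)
The paper gives no proof of this theorem; it only cites Gottesman and Boykin et al. Your outline is the standard argument from those sources and from Nielsen--Chuang: the exact sequence plus transvections for generation, the stabilizer tableau for simulation, and the irrational rotation $R_z(\pi/4)R_x(\pi/4)$ for universality. So your route is the expected one and the overall structure is sound, but three local steps are wrong as written and need repair.

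First, the step you call the main obstacle tests the wrong number. If $\theta/\pi\in\mathbb{Q}$, then $2\cos(\theta/2)=\zeta+\zeta^{-1}$ is an algebraic integer, but $\cos(\theta/2)$ itself need not be; for example $\cos(\pi/3)=1/2$. So showing that $\cos(\theta/2)$ has norm $1/8$ proves nothing. The correct check is that $2\cos(\theta/2)=1+\sqrt2/2$ has minimal polynomial $x^2-2x+\frac12$, hence norm $\frac12\notin\Z$. This gives the contradiction.

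Second, the axes $\vec n$ and $h\vec n h$ are not orthogonal. A three-factor product $R_{\vec n}(a)R_{\vec m}(b)R_{\vec n}(c)$ moves $\vec n$ by at most twice the angle between the axes, so the usual Euler decomposition does not reach all of $\mathrm{SU}(2)$. There are two ways to fix this:
\begin{itemize}
\item use an alternating product with a larger but bounded number of factors; or
\item use the Lie-algebra argument: the closure of $\langle h,t\rangle$ modulo phase is a closed subgroup containing two non-parallel one-parameter subgroups, so its Lie algebra is all of $\mathfrak{su}(2)$, and by connectedness it is everything.
\end{itemize}

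Third, the remark that entries lie in $\Z[e^{i\pi/4}]/\sqrt2^{\,k}$ does not by itself show that the only scalars in $G$ are eighth roots of unity. Argue instead via determinants. Every embedded generator has determinant in $\{\pm1,\pm i\}$, so a scalar $\lambda\,\mathrm{id}\in G$ satisfies $\lambda^{2^n}\in\{\pm1,\pm i\}$. Hence $\lambda$ is a root of unity lying in $\mathbb{Q}(e^{i\pi/4})$, and the only such roots are the eighth roots of unity.
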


\begin{Remark} A vector of the form $x\otimes y=\begin{pmatrix}x_1\\x_2\end{pmatrix}\otimes\begin{pmatrix}y_1\\y_2\end{pmatrix}\in \Cpx^4$ is said to be separable. The matrix $c$ sends a separable vector $x\otimes y$ to
\begin{align*}
c(x\otimes y)&=\begin{pmatrix}x_1y_1\\x_1y_2\\x_2y_2\\x_2y_1\end{pmatrix}
=\begin{pmatrix}x_1\\0\end{pmatrix}\otimes\begin{pmatrix}y_1\\y_2\end{pmatrix}+
\begin{pmatrix}0\\x_2\end{pmatrix}\otimes\begin{pmatrix}y_2\\y_1\end{pmatrix}\\
&=\frac{1}{2}\begin{pmatrix}x_1\\x_2\end{pmatrix}\otimes\begin{pmatrix}y_1+y_2\\y_1+y_2\end{pmatrix}
+\frac{1}{2}\begin{pmatrix}x_1\\-x_2\end{pmatrix}\otimes\begin{pmatrix}y_1-y_2\\-(y_1-y_2)\end{pmatrix}
\end{align*}
which is non-separable i.e., {\em entangled} if $x_1x_2(y_1+y_2)(y_1-y_2)\neq 0$. 
In classical logic, there are no entanglements; $\begin{pmatrix}1\\0\end{pmatrix}$ represents the truth value $0$; and $\begin{pmatrix}0\\1\end{pmatrix}$ the other truth value $1$. Then $c$ works as a logic gate that negates $y$ only when $x$ is true, hence Controlled NOT gate. 
To swap the components $x$ and $y$, one can use the matrix 
\[
\mathrm{swap}:=c(h\otimes h)c(h\otimes h)c=\begin{pmatrix}1&0&0&0\\0&0&1&0\\0&1&0&0\\0&0&0&1\end{pmatrix}:
x\otimes y\mapsto y\otimes x.
\]
This is why ${\sigma_0}^{\otimes(k-1)}\otimes c\otimes {\sigma_0}^{\otimes(n-k-1)}$ ($k=1,\dots, n-1$) are sufficient. 
\end{Remark}

\section{Topos quantum theory}
\subsection{The D\"oring-Isham-Nakayama theory}
Let $A=B(H)$ be the algebra of bounded linear operators on a Hilbert space $H$,
with the Hermitian adjoint $^*$ as its involution.
For any subset $S\subseteq A$, its commutant $S'$ consists of the operators in $B(H)$ commuting with every element of $S$. All commutants below are taken in $B(H)$.
By the von Neumann double commutant theorem, $V=(S\cup S^*)''$ is the smallest weakly closed unital $^*$-subalgebra of $B(H)$ containing $S$, namely the von Neumann algebra generated by $S$. If $S\subseteq M\subseteq B(H)$ for a unital von Neumann subalgebra $M$, then $V\subseteq M$.
There is the absolute version of von Neumann algebra called $W^*$-algebra, but any $W^*$-algebra is realizable as above.  
In this article, we assume $H$ to be separable for considering density operators and probabilities. 
We refer to \cite{Takesaki} for the mathematical theory of operator algebras. 

The topos quantum theory of D\"oring and Isham \cite{D5,DI6} deals with the poset $(\V_A,\subseteq)$ of all {\em commutative} von Neumann subalgebras of $A$ with the same unit $\mathrm{id}_H$ under inclusions, which is considered as the whole of contexts. 
A presheaf on a poset $(\V,\subseteq)$ is a pair $X=(X_{\mathrm{obj}},X_{\mathrm{mor}})$ of assignments of a set $X_{\mathrm{obj}}(V)$ for each element $V\in \V$ and a map $X_{\mathrm{mor}}(U\subseteq V): X_{\mathrm{obj}}(V)\to X_{\mathrm{obj}}(U)$ for each inclusion relation $U\subseteq V$ which satisfies the following rules:
\begin{align*}
 X_{\mathrm{mor}}(V\subseteq V)
&=\mathrm{Id}_{X_{\mathrm{obj}}(V)}\quad(\forall V)\\
 X_{\mathrm{mor}}(T\subseteq U)\circ X_{\mathrm{mor}}(U\subseteq V)
&=X_{\mathrm{mor}}(T\subseteq V)\quad(\forall T\subseteq U\subseteq V) 
\end{align*}
We omit the subscriptions obj and mor unless afraid of confusion. The set of all presheaves on a poset $\V$ defines a topos, but the topos quantum theory deals only with special presheaves. So, in this article, we treat any subset as such in naive set theory, not using the abstract description as a subobject in the category of sets.

A point or a global presheaf section of $X$ is an element of 
\[
\Gamma X=\left\{f:\V\to \bigsqcup_{V\in \V}X(V) \mid 
\begin{array}{l}
f(V)\in X(V) ~(\forall V) \quad\textrm{and} \\
f(U)=X(U\subseteq V)(f(V)) ~(\forall U\subseteq V) 
\end{array}
\right\}
\]
which is possibly empty. 
A point in classical topology can be thought as a constant point of a constant presheaf over the poset of open sets. 
In topos theory, we can not use points to define a topology. Instead, we use a closure operation for subpreshaves as a topology. 
Here a subpresheaf $Y\subseteq X$ of a presheaf $X$ on $\V$ is a presheaf on $\V$ which satisfies $Y(V)\subseteq X(V)$ for $\forall V\in \V$ and $Y(U\subseteq V)=X(U\subseteq V)|_{Y(V)}$ ($\Rightarrow X(U\subseteq V)(Y(V))\subseteq Y(U)$) for $\forall U\subseteq V$, and the sets of such presheaves is denoted by $\mathrm{Sub}(X)$. 

Any subset $S\subseteq A$ defines the idempotent map $\flat=\flat_S:\V_A\to\V_A$ by 
\[
\flat_S(V)=\bigl((S\cup S^*)\cap V\bigr)''.
\] 
We define the $\flat$-induced closure of a subpresheaf $Y\in \mathrm{Sub}(X)$ as the maximal presheaf $\cl=\cl_{X,\flat} Y\in \mathrm{Sub}(X)$ satisfying
\[
X(\flat (V)\subseteq V)(\cl_{X,\flat} Y(V))\subseteq Y(\flat V)\quad(\forall V\in \V_A).
\]
The fibers $\{X(V)\}_{V\in\flat(\V_A)}$ on the image of $\flat$ form a presheaf, which we denote by $X_\flat$ and call the $\flat$-reduction of $X$.  The $\flat$-induced topology on the reduction $X_\flat$ is called the discrete topology since $\cl=\textrm{id}$. 
As a whole, the $\flat$-induced topology merely serves as a foundation; we further consider other geometry and topology built upon it. 

We use the following results in quantum theory: The set $\mathrm{Hom}(V,\Cpx)$ of unital algebra homomorphisms from $V\in \V_A$ to $\Cpx$ carries the weak topology as the Stone dual to the Boolean algebra of the projections in $V$. It is a compact Hausdorff space such that the closure of any open set is open. Here the dual of a projection $\pi$ in $V$ is the clopen subset of $\mathrm{Hom}(V,\Cpx)$ consisting of homomorphisms $\lambda$ taking the value $1$ for the projection, i.e., $\lambda(\pi)=1$. Further, each inclusion $U\subseteq V$ induces an open continuous surjection $\mathrm{Hom}(V,\Cpx)\to \mathrm{Hom}(U,\Cpx)$. 

\begin{Definition}
The spectral presheaf $\Sigma_A$ on $\V_A$ is defined by $\Sigma(V)=\mathrm{Hom}(V,\Cpx)$ and $\Sigma_A(U\subseteq V):\lambda\mapsto \lambda|_U$. Let $r_V=\Sigma_A(\flat V\subseteq V)$ denote restriction. A subpresheaf $X\subseteq\Sigma_A$ is called a $\flat$-proposition if $X(\flat V)$ is clopen and \mbox{$X(V)=r_V^{-1}(X(\flat V))$} for every $V\in\V_A$. In the case where $\flat$ is the identity, a $\flat$-proposition is called a proposition. Let $\flat^{-1}\mathrm{Prop}((\Sigma_A)_\flat)$ denote the set of $\flat$-propositions and $\mathrm{Prop}(\Sigma_A)$ that of propositions. 
\end{Definition}

For a proposition $X$, the dual of $X(\flat V)$ is a projection in $\flat V$, and can also be considered as a projection in $V$. As the restriction $\Sigma_A(\flat V\subseteq V)$ is continuous, the dual of this projection in $V$ is again a clopen subset, which is by definition the $\flat$-induced closure $\cl X(V)$. Note that it is different from the classical closure of the clopen subset $X(V)$ under the weak topology if it is larger than $X(V)$ and this is the case when $X$ is not a $\flat$-proposition. In other word, a $\flat$-proposition is the fattening of a proposition up to its $\flat$-induced closure. Further, the restriction is a homeomorphism if it is injective. Indeed we have   

\begin{Theorem}
\begin{enumerate}

\item[(i)] (Nakayama \cite{Nakayama1,Nakayama2,Nakayama3}) Each fiber of the reduction of a $\flat$-proposition to the image $\flat(\V_A)$ is a clopen subset of a fiber of $\Sigma_A$. The $\flat$-proposition is a trivial extension of this reduction meaning that, for each $V\in \V_A$, the reduction of the $\flat$-projection to the preimage $\flat^{-1}(\flat V)$ is a constant presheaf or its redundance by adding extra elements to each location in each fiber. 

\item[(ii)] (Bayesian version) Given a set $T$ with $S\subseteq T\subseteq A$, we put $\sharp=\flat_T$ and take the reduction $\Sigma=(\Sigma_A)_\sharp$ of the spectral presheaf $\Sigma_A$ to the poset $\V=\sharp(\V_A)$. Then, we can replace all $\V_A$ and $\Sigma_A$ in the above arguments with $\V$ and $\Sigma$ respectively and adjust the domain of $\flat$ to paraphrase the above statement $\mathrm{(i)}$.  

\end{enumerate}
\end{Theorem}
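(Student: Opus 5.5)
The plan is to prove (i) directly from the definition of a $\flat$-proposition and the idempotence of $\flat$, and then to obtain (ii) by checking that the argument uses only structural features that survive the passage to $\V=\sharp(\V_A)$.

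\textbf{Step 1: properties of $\flat=\flat_S$.} I first record three facts.
\begin{itemize}
\item[(a)] $\flat V\subseteq V$, since $V$ is a von Neumann algebra containing $(S\cup S^*)\cap V$.
\item[(b)] $\flat$ is monotone: $U\subseteq V$ implies $\flat U\subseteq\flat V$.
\item[(c)] $\flat$ is idempotent: $(S\cup S^*)\cap V\subseteq\flat V\subseteq V$ gives $(S\cup S^*)\cap\flat V=(S\cup S^*)\cap V$.
\end{itemize}
Consequently the image $\flat(\V_A)$ is exactly the fixed-point set of $\flat$. In particular, for every $W$ in the fibre $\flat^{-1}(\flat V)$ we have $\flat W=\flat V\subseteq W$.

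\textbf{Step 2: clopenness of the reduced fibres.} Let $X$ be a $\flat$-proposition and let $V\in\flat(\V_A)$. Then $\flat V=V$, so the defining condition says that $X(V)=X(\flat V)$ is clopen in $\Sigma_A(V)$. This gives the first sentence of (i).

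\textbf{Step 3: triviality over each fibre.} Fix $U:=\flat V$ and put $F:=X(U)$, which is clopen. For $W\in\flat^{-1}(U)$ the definition gives
\[
X(W)=r_W^{-1}(F)=\{\lambda\in\mathrm{Hom}(W,\Cpx)\mid \lambda|_U\in F\}.
\]
For $W_1\subseteq W_2$ in this fibre, the presheaf map is restriction $\lambda\mapsto\lambda|_{W_1}$. Since $(\lambda|_{W_1})|_U=\lambda|_U$, restriction sends $X(W_2)$ into $X(W_1)$, and everything commutes with the $r_W$. Hence $X$ restricted to $\flat^{-1}(U)$ is the pullback along the maps $r_W$ of the constant presheaf with value $F$.

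\begin{itemize}
\item If $r_W$ is injective, then, since $\mathrm{Hom}(W,\Cpx)$ is compact and $\mathrm{Hom}(U,\Cpx)$ is Hausdorff, $r_W$ is a homeomorphism onto its image. By the stated surjectivity of restriction it is onto, so $X(W)\cong F$ and the presheaf is literally constant there.
\item In general each $\mu\in F$ has the fibre $r_W^{-1}(\mu)$ over it. This is the ``redundance by adding extra elements to each location'': every point of the constant value $F$ is replaced by the nonempty set of its extensions, and surjectivity guarantees nonemptiness.
\end{itemize}
This proves the second sentence of (i).

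\textbf{Step 4: the Bayesian version (ii).} We have $S\subseteq T$, so $\flat_S\flat_T=\flat_S$; the reason is that $(S\cup S^*)\cap(T\cup T^*)''_V\supseteq(S\cup S^*)\cap V$ together with (a). We also have $\flat_S V\subseteq\flat_T V$. Thus $\flat=\flat_S$ restricts to an idempotent, monotone, deflationary map on $\V=\sharp(\V_A)$, and by (b) its image $\flat(\V)$ lies in $\V$. The reduced presheaf $\Sigma=(\Sigma_A)_\sharp$ has the same fibres $\mathrm{Hom}(V,\Cpx)$ and the same restriction maps on $\V$. Steps 1--3 used only (a)--(c) together with the Stone-dual properties of the fibres: compact Hausdorff, clopen duals, and surjective open restrictions. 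All of these hold verbatim on $\V$, so (i) transfers.

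\textbf{Main obstacle.} The step I expect to need the most care is the identity $\flat_S\circ\flat_T=\flat_S$ used in Step 4. I plan to prove it by applying Step 1(c) to $\flat_T V$: since $\flat_T V\subseteq V$, we get $(S\cup S^*)\cap\flat_T V\subseteq(S\cup S^*)\cap V$. For the reverse inclusion, $S\subseteq T$ puts $(S\cup S^*)\cap V$ inside $(T\cup T^*)\cap V\subseteq\flat_T V$. Together these give equality, and hence $\flat_S\flat_T V=\flat_S V$.
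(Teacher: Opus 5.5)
Your argument is correct. It is essentially the paper's second suggested route for (ii): change $\Sigma_A$ to $\Sigma$ over the subposet and rerun the proof of (i). You spell out what the paper leaves implicit, and you also prove (i) directly from the definitions, where the paper simply cites Nakayama.

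One justification in Step 4 needs fixing. The inclusion $\flat(\V)\subseteq\V$ does not follow from monotonicity (b). What you need is $\sharp\circ\flat_S=\flat_S$. It follows from the pointwise inequality $\flat_S V\subseteq\flat_T V$ you already state, together with (c) and (a):
$\flat_S V=\flat_S\flat_S V\subseteq\flat_T\flat_S V\subseteq\flat_S V$.

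This identity, not $\flat_S\circ\flat_T=\flat_S$, is what puts $\flat V$ back in $\V$. Only then do $\Sigma(\flat V)$ and $r_V$ belong to the reduced presheaf, and a $\flat$-proposition on $\Sigma$ even makes sense.

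The identity you flagged as the main obstacle, $\flat_S\circ\flat_T=\flat_S$, gives something else: $\flat(\V)=\flat_S(\V_A)$. That is the extra ingredient the paper's first route (``apply (i) with $\sharp$ and compare'') needs. It shows every $\flat_S$-proposition on $\Sigma_A$ is a $\sharp$-proposition, so $\flat$-propositions on $\Sigma$ and $\flat_S$-propositions on $\Sigma_A$ correspond bijectively by restriction and pullback.

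A minor point: by Gelfand duality, $r_W$ is injective only when $W=\flat V$. So your ``literally constant'' bullet in Step 3 only covers the base point of each fibre. This mirrors the paper's own phrasing and is harmless.
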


\begin{proof} 
For the proof of (ii), we apply (i) by replacing $\flat$ with $\sharp$ and compare the result with (i). It can also be achieved by changing $\Sigma_A$ with $\Sigma$ (over any subposet) and following the original proof of (i). 
\end{proof}

We notice that (ii) implies (i) when $T=A$. Hereafter, we drop the subscript $A$ to adopt the setting for the more general case (ii). 

\begin{Definition} Let $\Pi=\{\pi \in A \mid \pi^2=\pi=\pi^*\}$ be the set of projections. For each context $V\in \V$, 
we define the $\flat$-outer estimation 
\[
\mathrm{O}_V: \Pi\to \Pi\cap \flat V: \pi \mapsto \inf\{a\in \Pi\cap \flat V\mid \image a \supseteq\image\pi\}
\]
at $V$. For a proposition $X$, let $\mu(X,V)$ be the unique projection $\pi\in\Pi\cap V$ such that
\[
X(V)=\{\lambda\in\Sigma(V)\mid\lambda(\pi)=1\}.
\]
If $X$ is a $\flat$-proposition, then $\mu(X,V)\in\flat V$; we call this restriction of the duality map the $\flat$-man.
The $\flat$-daseinization\footnote{Heidegger's Dasein is a practical approach to some project in the world that enables a particular choice, while his Das Man is still an unresolved, as-is variety of cases. The term ``man'' does not appear in topos quantum theory, but its inclusion here fits the notion of contextual probability. We notice that the term daseinization is quite misleading; Das Man is Dasein unaware of resolve and thus the shift that actualize authentic Dazein should be termed resolution or resolve. } is the map $\delta: \Pi\to \flat^{-1}\mathrm{Prop}(\Sigma_\flat)$ defined by  
\[
\mathrm{O}_V(\pi)=\mu(\delta(\pi),V).
\]
If $\flat$ is the identity, these maps are called the outer estimation, the man, and the daseinization, respectively. 
\end{Definition}

\begin{Remark} 
The infimum of any family of projections is a projection. The intersection or infimum of a family of ($\flat$-)propositions is a proposition which is given fiberwise by the interior of the set-theoretic intersection, while the union or supremum of infinite family is the fiberwise closure of the union under the weak topology. The intersection and union of propositions form a Heyting algebra, i.e., a bounded distributive lattice with the intuitionistic implication $Z=(X\Rightarrow_{\mathrm{intuit}} Y)$ which is the maximal proposition $Z$ satisfying $X\cap Z\subseteq Y$. It is Boolean only when the intuitionistic negation $\neg_{\mathrm{intuit}} X:=(X\Rightarrow \emptyset)$ satisfies the law of excluded middle $\neg_{\mathrm{intuit}} X\cup X=\Sigma$. 
\end{Remark}

\begin{Definition} The contextual probability of a $\flat$-proposition $X$ on a density $\rho\in P_A=\{a\in A \mid a=a^*\geq 0 ~\textrm{ and }~ \tr(a)=1\}$ is 
\[
\mu_\rho(X,V)=\tr(\rho \mu(X,V)).
\]
We use the same formula for an arbitrary proposition $X$.
\end{Definition}
\begin{Remark}
The projection $\mu(X,V)$ is not always a trace class, but the composition $\rho \mu(X,V)$ is. 
For $\pi\in\Pi$ and $V\in\V$, we have $\mu_\rho(\delta(\pi),V)\geq\tr(\rho\pi)$, with equality whenever $\pi\in\flat V$. Thus the Born probability is the minimum over $\V$ if such a context exists. 
\end{Remark} 

\subsection{Bayesian perspective} Following on from above, we apply
\begin{align*}
&S\subseteq T\subseteq A, \quad
\sharp=\flat_T, \quad
\V=\sharp(\V_A), \quad
\Sigma=(\Sigma_A)_\sharp, \quad 
\flat=(\flat_S)|_{\V}, \\
&\mathrm{Prop}(\Sigma)=\{X\subseteq \Sigma \mid X(V): \textrm{clopen }~ (\forall V)\}\supseteq \flat^{-1}\mathrm{Prop}(\Sigma_\flat).
\end{align*}
We propose a Bayesian updating of these items within the Heisenberg picture. 
For a density $\rho\in A$, there exists a unitary operator $u\in A$ diagonalizing $\rho$, i.e., $u^* \rho u$ is diagonal. 
Thus, we shift the choice of $T$ and its subset $S$ by conjugating them with $u^*$ if necessary to assume that the density $\rho$ is diagonal and further the eigenvalues are sorted in monotonically decreasing order from the outset. 

Our first hypothesis is that the system of operators behaves as a geometric figure under the symmetry of the unitary group. This means that the global unitary rotation of $T$ and $S$ is only inferred through the change of probability with respect to the fixed density $\rho$, whereas the inner products between the operators are clear and invariant under the rotation. However, if we permit any rotation, the notion of probability becomes uncontrollable; if we use only rotations commuting with the density, the probability ceases to fluctuate even slightly. 

Instead, we localize the unitary group as follows. We work with the fixed pure-state density $\rho=\diag(1,0,0,\dots)$ and restrict subsequent changes of basis of $H$ to those preserving this form. Particularly, when $H$ is decomposed into a tensor product, the form of the density in each component must be $\diag(1,0,0,\dots)$. Now we take a tensor decomposition $H=H_1\otimes H_2$ with $m=\dim H_2$ satisfying $1<m<\infty$. Even when the set of operators is rotated as a figure, the density $\rho$ remains anchored as 
\[
\rho=\rho_1\otimes\rho_2,\quad \rho_1=\diag(1,0,0,\dots),\quad \rho_2=\diag(1,0,\dots,0).
\]
The relative complement $T\setminus S$, instead of $S$, yields a poset $\overline{\flat(\V)}$ called the complementary poset. We make the following assumptions. 
\begin{enumerate}
\item[(i)] All contexts $V$ in the complementary poset $\overline{\flat(\V)}$ are of the form $V_1\otimes V_2$ and there exists a context with $\dim V_2=m$. 
\item[(ii)] There exists a unitary operator $\widetilde{v}$ of the form $\mathrm{id}_{H_1}\otimes v$ which partially diagonalizes all contexts in the complementary poset $\overline{\flat(\V)}$ as $\widetilde{v}^*(V_1\otimes V_2)\widetilde{v}=V_1\otimes$ (a set of diagonal matrices). 
\item[(iii)] On the probability, any projection $\pi=\pi_1\otimes \pi_2$ in any context of the complementary poset $\overline{\flat(\V)}$ satisfies $\tr(\rho_1\pi_1)=1$. 
\end{enumerate}
These assumptions form our second hypothesis on localizability. 

\begin{Theorem} Under the geometricity and localizability hypotheses, we substitute $\widetilde{v}^* S\widetilde{v}$ and $\widetilde{v}^*T\widetilde{v}$ for $S$ and $T$ respectively to update the setup. Then the updated contextual probability of a proposition in a context of $\overline{\flat(\V)}$ becomes the truth value $0$ or $1$. Considering the choice of the first column of $v$ under a permutation of the columns as a sampling, the updated contextual probability of a $\flat$-proposition is interpreted as a conditional probability. 
\end{Theorem}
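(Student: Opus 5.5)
The plan is to compute directly with the conjugated contexts, using hypotheses (i)--(iii) to reduce everything to a computation on $H_2$. Replace $T$ and $S$ by $\widetilde{v}^*T\widetilde{v}$ and $\widetilde{v}^*S\widetilde{v}$. Since conjugation by a unitary commutes with taking commutants, it commutes with $\sharp=\flat_T$ and $\flat=\flat_S$. Hence the updated complementary poset is exactly $\{\widetilde{v}^*V\widetilde{v}\}$ for $V\in\overline{\flat(\V)}$, and the updated propositions correspond bijectively to the old ones via the duality map $\mu$. The density $\rho=\rho_1\otimes\rho_2$ is kept fixed, as the geometricity hypothesis requires. So the updated contextual probability of a proposition $X$ at a context $V=V_1\otimes V_2$ is $\tr\bigl(\rho\,\widetilde{v}^*\mu(X,V)\widetilde{v}\bigr)$.

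The first step is the $0$/$1$ claim. By (ii), $\widetilde{v}^*V\widetilde{v}=V_1\otimes D$ with $D$ a set of diagonal matrices on $H_2$. A projection in this commutative algebra decomposes as a sum $\sum_k \pi_1^{(k)}\otimes e_k$ of products of projections, where the $e_k$ are diagonal minimal projections in $D$; if needed, first pass to the generated algebra. The trace then factorizes as $\sum_k\tr(\rho_1\pi_1^{(k)})\tr(\rho_2 e_k)$. Hypothesis (iii) gives $\tr(\rho_1\pi_1^{(k)})=1$, and $\tr(\rho_2 e_k)$ is the $(1,1)$ entry of a diagonal $0$/$1$ matrix. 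Since the $e_k$ are orthogonal, at most one of them contains the first basis vector. The probability is therefore $1$ or $0$ according to whether $e_1\in\image\,\mu(X,V)$. This step is routine once the decomposition into product projections is justified.

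The second step is the interpretation. Permuting the columns of $v$ only permutes the diagonal entries of $D$. Choosing which column comes first amounts to choosing the eigenvector $v e_j$ of the maximal context (the one with $\dim V_2=m$ from (i)) onto which $\rho_2$ is anchored. Over such a choice, $\tr(\rho\,\widetilde{v}^*\pi\widetilde{v})=\langle e_1\otimes ve_j,\pi\,e_1\otimes ve_j\rangle$, which is a sample outcome. For a $\flat$-proposition $X$, $\mu(X,V)$ lies in $\flat V$, and the set $S$ is unaffected in its role; by Theorem (ii) on the Bayesian version, the $\flat$-proposition is determined by its reduction to $\flat(\V)$. I would then write the updated $\mu_\rho(X,V)$ as $\tr\bigl(\rho\,(\mathrm{id}\otimes P_j)\,\widetilde{v}^*\mu(X,V)\widetilde{v}\bigr)$, where $P_j$ is the sampled rank-one projection that has become data. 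Normalizing this by $\tr(\rho(\mathrm{id}\otimes P_j))=1$ gives the Lüders conditional probability $\tr(P\mu P\rho)/\tr(P\rho)$ of $X$ given the outcome $j$ of the measurement of $V_2$, written in the Heisenberg picture.

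The main obstacle is making the word ``interpreted'' precise, since the statement is partly interpretive. I would fix it by proving the identity between the updated contextual probability and the conditional expectation $\tr(P_j\mu P_j\rho')/\tr(P_j\rho')$, with $\rho'$ the pre-update density. I would also check that averaging over the column permutations, weighted by $\tr(\rho' P_j)$, recovers the pre-update value, which is the law of total probability. This averaging identity is the step most likely to need extra care, because $\mu(X,V)$ need not commute with $\mathrm{id}\otimes P_j$ when $V$ lies outside the complementary poset.
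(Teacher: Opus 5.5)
Your $0$/$1$ step and your first-column formula match the paper's argument. The paper also uses (ii) and (iii) to reduce to the top-left block, uses (i) to fix $v$ up to permutation and phases of columns, and computes $\tr(\rho\,\widetilde v^*\mu\,\widetilde v)=v_0^*\mu_{00}v_0$. Your decomposition $\sum_k\pi_1^{(k)}\otimes e_k$ just makes the $0$/$1$ conclusion more explicit; read (iii) for nonzero product projections.

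Your L\"uders identity is also correct, and it is sharper than what the paper states. Let $\epsilon$ be the first basis vector of $H_2$, and take $P=\mathrm{id}_{H_1}\otimes|ve_j\rangle\langle ve_j|$ in the \emph{pre-update} frame. Then $P\rho P=\tr(P\rho)\,\rho_1\otimes|ve_j\rangle\langle ve_j|$, so $\tr(P\mu P\rho)/\tr(P\rho)=v_0^*\mu_{00}v_0$ with $v_0=ve_j$, provided $\tr(P\rho)=|\langle\epsilon,ve_j\rangle|^2\neq0$. Your middle paragraph mixes frames here: the normalizer equal to $1$ lives in the updated frame, while the L\"uders denominator lives in the old one.

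The step that would fail is the averaging identity you plan to check. It is not a delicate point but a false claim. Since $\sum_jP_j=\mathrm{id}$, the prior is $\tr(\mu\rho)=\sum_{j,k}\tr(\mu P_j\rho P_k)$. The $\tr(P_j\rho)$-weighted average of the posteriors is only the diagonal part $\sum_j\tr(\mu P_j\rho P_j)$. The cross terms with $j\neq k$ vanish when $\mu(X,V)$ commutes with every $P_j$, but not in general, and restricting to $\flat$-propositions does not help.

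The paper's Example~\ref{ih}(ii) is a counterexample. There $V=V_2^4(1,2,3)$ has $\flat V=V$, so the $\flat$-daseinization of $\diag_u(0,1,0,0)$ is a $\flat$-proposition with prior probability $|u|_{\mathrm{top}(1)}^2=0$. Both outcomes $v^*=h$ and $v^*=\sigma_1h$ have probability $\frac12$, and both give posterior $\frac14$, since the top rows of $\widetilde v^*u$ are $\frac12(1,1,1,1)$ and $\frac12(1,-1,1,-1)$.

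So you should drop this check. As the paper proves it, the theorem only asserts that the posterior depends on $v$ solely through the sampled first column, i.e.\ that it is determined by the datum. Your L\"uders identity already makes that precise. These conditional probabilities do not recombine into the prior by the law of total probability; that failure is exactly the measurement disturbance the framework describes. Compare the Remark following Proposition~\ref{1-qubit}, where the classical bookkeeping is consistent only once the sampling order is fixed.
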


\begin{proof} The assumptions (ii) and (iii) imply that any projection in any context of the complementary poset $\overline{\flat(\V)}$ becomes diagonal after the substitution. Then the assumption (i) ensures that the unitary matrix $v\in \mathrm{U}(m)$ is uniquely determined up to permutation and phase-shift of columns. Here each column corresponds to 
a proposition which is true only when that column takes the first place. 
The probability of a proposition depends on the choice of $v$, but only on the first column $v_0$ of $v$ and the top-left $m\times m$-block $\mu_{00}$ of $\mu$ as follows. 
\begin{align*}
&\tr\begin{pmatrix}
\rho_2&0&\cdots\\
0&0&\\
\vdots&&\ddots
\end{pmatrix}
\begin{pmatrix}
v^*&0&\dots\\
0&v^*&\\
\vdots&&\ddots
\end{pmatrix}
\begin{pmatrix}
\mu_{00}&\mu_{01}&\cdots\\
\mu_{10}&\mu_{11}&\\
\vdots&&\ddots
\end{pmatrix}
\begin{pmatrix}
v&0&\cdots\\
0&v&\\
\vdots&&\ddots
\end{pmatrix}\\
&=\tr\begin{pmatrix}
1&0&\cdots\\
0&0&\\
\vdots&&\ddots
\end{pmatrix}
\begin{pmatrix}
v_0^*\\
v_1^*\\
\vdots
\end{pmatrix}
\mu_{00}
\begin{pmatrix}
v_0&v_1&\cdots
\end{pmatrix}=v_0^*\mu_{00}v_0
\end{align*}
Thus, the posterior contextual probability is determined by the data showing which of such propositions become true. Further, the set of those propositions is no longer divisible in a $\flat$-proposition, and the information on $v$ including which is true is absorbed into the internal parameter of the theory and is no longer a subject of prediction. 
\end{proof}
This phenomenon where a proposition with probability neither $0$ nor $1$ becomes either $0$ or $1$ due to local changes would be what has been considered as wave function collapse. In our perspective, it changes a predictable quantity into data by shifting the application range of the theory. Since this change itself obeys the prior probability, the updating process will be a stochastic process once we can program an appropriate sequence of sampling methods. 

\section{Topos theory on quantum computing} 
Hereafter we put $H=\Cpx^{2^n}$ and apply the Bayesian version of the Nakayama theory to quantum computing. The algebra $A$ is the ring of matrices and each context $V$ consists of simultaneously diagonalizable matrices with specific patterns of multiplicity in their eigenvalues. We change the basis as needed, but we always assume that the density takes the form $\rho=\diag(1,0,\dots,0)$. Particularly, for a tensor decomposition $H=H_1\otimes H_2=\Cpx^{2^{n-k}}\otimes \Cpx^{2^k}$, we assume 
\[
\rho=\rho_1\otimes \rho_2=\diag(1,0,\dots,0)\otimes\diag(1,0,\dots,0).
\]
while we can change the role of $\rho_1$ and $\rho_2$ at will. Any subpresheaf $X$ of $\Sigma$ is a proposition since $\Sigma(V)$ is discrete in the weak topology. The $\flat$-induced closure is $\cl X(V)=r_V^{-1}(X(\flat V))$, where $r_V:\Sigma(V)\to\Sigma(\flat V)$ is restriction. Let $u\in \mathrm{U}(2^n)$ be a unitary matrix diagonalizing  $V$. Any matrix in $V$ is then written in the form 
\[
\diag_{u}(x_0, \dots, x_{2^n-1}):=u\diag(x_0, \dots, x_{2^n-1})u^*.
\]
We fix the notation for the spectra as follows. 
\[
\lambda_j^u: \diag_{u}(x_0, \dots, x_{2^n-1})\mapsto x_j \quad (j=0,\dots,2^n-1).
\]
For a subcontext $V$, we represent $X(V)$ by its inverse image under restriction from the spectrum of the maximal diagonal algebra in the chosen basis. 
If $V$ is of dimension $2^n$, it is the complexification of the maximal torus 
\[
T^{2^n}_u:=\{\diag_u(x_0, \dots x_{2^n-1})\mid x_0,\dots,x_{2^n-1} \in \mathrm{U}(1)\}.
\]
The homogeneous space $\mathrm{U}(2^n)/T^{2^n}_{\mathrm{id}}$ is a complete flag manifold.

\subsection{1-qubit case} 
When $n=1$, each maximal torus corresponds to a pair of antipodes on the sphere $\mathrm{U}(2)/T^2_{\mathrm{id}}=\Cpx P^1$. Identifying them, we obtain the real projective plane $\R P^2$, which is the quotient $\mathrm{U}(2)/N(T^2)$ by the normalizer of $T^2$. The complement $U=\mathrm{U}(2)\setminus S^1$ of the circle $S^1=\mathrm{U}(1)\mathrm{id_2}$ of scalar unitary matrices is fibered by open annuli over $\R P^2$. The union of each fiber with $S^1$ is a maximal torus. The fiber containing a non-scalar unitary matrix 
\[
\diag_u(d_0,d_1)=\diag_{u\sigma_1}(d_1,d_0) \in U \quad (d_0, d_1\in \mathrm{U}(1), d_0\neq d_1)
\] 
and the other matrix $\diag_u(d_1,d_0)$ accompanies the context 
\[
V^2_1(u)=\{\diag_u(x_0,x_1)\mid x_0,x_1\in\Cpx\}=V^2_1(u\sigma_1)
\]
of dimension two. The set $\Sigma(V^2_1(u))$ consists of the two spectra 
\[
\lambda_0^u:\diag_u(x_0,x_1)\mapsto x_0,\quad \lambda_1^u:\diag_u(x_0,x_1)\mapsto x_1.
\] 
The annulus bundle has no cross section ($\pi_1 U=\Z$, $\pi_1\R P^2=\Z_2$), 
so we take a double section $D\cong S^2$ in $U$ which meets each fiber transversely at two points and satisfies $\sigma_1 D\sigma_1=D$. 

\begin{Proposition}
Let $T\subseteq D$, with $\diag_v(e_0,e_1)$ its sole representative of the context $V^2_1(v)$, and put $S=T\setminus\{\diag_v(e_0,e_1)\}$. 
We write any matrix in $S$ in the form $\diag_u(d_0,d_1)$. Then the map $\flat$ satisfies
\[
\flat V^2_1(v)=V^1_1,\quad \flat V^2_1(u)=V^2_1(u)
\]
where $V^1_1$ denotes the $1$-dimensional context of scalar matrices. 
The localization diagonalizes $\diag_v(e_0,e_1)$; we use conjugation by $v^*$ or $\sigma_1v^*$ as representatives of the two sampling outcomes. Beforehand, the prior probabilities are
\begin{align*}
\mu_\rho(X, V^2_1(u))
&=
\begin{cases}
0&X(V^2_1(u))=\emptyset\\ 
|u|_{\mathrm{top}(0)}^2&X(V^2_1(u))=\{\lambda_0^u\}\\
1-|u|_{\mathrm{top}(0)}^2&X(V^2_1(u))=\{\lambda_1^u\}\\
1&X(V^2_1(u))=\{\lambda_0^u,\lambda_1^u\}
\end{cases}\\
\mu_\rho(X, V^2_1(v))
&=
\begin{cases}
0&X(V^2_1(v))=\emptyset\\ 
|v|_{\mathrm{top}(0)}^2&X(V^2_1(v))=\{\lambda_0^v\}\\
1-|v|_{\mathrm{top}(0)}^2&X(V^2_1(v))=\{\lambda_1^v\}\\
1&X(V^2_1(v))=\{\lambda_0^v,\lambda_1^v\}
\end{cases}
\end{align*}
where $|\cdot|_{\mathrm{top}(\cdot)}^2$ denotes the squared absolute value of the top-left entry of a unitary matrix.
Now we perform the sampling.
\begin{enumerate}
\item[(i)] When $X(V^2_1(v))=\{\lambda_0^v\}$ becomes true with probability $|v|_{\mathrm{top}(0)}^2$, we conjugate the double section $D$ and its subsets $T$ and $S$ with $v^*$ to obtain the posterior probabilities of a proposition $Y$ as
\begin{align*}
\mu_\rho(Y, V^2_1(v^*u))
&=
\begin{cases}
0&Y(V^2_1(v^*u))=\emptyset\\ 
|v^*u|_{\mathrm{top}(0)}^2&Y(V^2_1(v^*u))=\{\lambda_0^{v^*u}\}\\
1-|v^*u|_{\mathrm{top}(0)}^2&Y(V^2_1(v^*u))=\{\lambda_1^{v^*u}\}\\
1&Y(V^2_1(v^*u))=\{\lambda_0^{v^*u},\lambda_1^{v^*u}\}
\end{cases}\\
\mu_\rho(Y, V^2_1(\mathrm{id}_2))
&=
\begin{cases}
0\qquad\qquad\qquad&Y(V^2_1(\mathrm{id}_2))\not\ni \lambda_0^{\mathrm{id}_2}\\
1&Y(V^2_1(\mathrm{id}_2))\ni\lambda_0^{\mathrm{id}_2}
\end{cases}.
\end{align*}
$Y$ is also a $\flat$-proposition if $Y(V^2_1(\mathrm{id}_2))$ is either $\emptyset$ or $\{\lambda_0^{\mathrm{id}_2}, \lambda_1^{\mathrm{id}_2}\}$. Although $D$ has no prominent role, in practice it prevents the swapping of events and counter-events within a fiber, while also contributing to the semantic correspondence between proposition $X$ and proposition $Y$. (Below, we omit such mechanisms.)

\item[(ii)] When $X(V^2_1(v))=\{\lambda_1^v\}$ becomes true with probability $1-|v|_{\mathrm{top}(0)}^2$, we take the conjugation with $\sigma_1v^*$ instead of $v^*$ to obtain 
\begin{align*}
\mu_\rho(Y, V^2_1(\sigma_1 v^*u))
&=
\begin{cases}
0&Y(V^2_1(\sigma_1v^*u))=\emptyset\\ 
|\sigma_1 v^*u|_{\mathrm{top}(0)}^2&Y(\textrm{do.})=\{\lambda_0^{\sigma_1 v^*u}\}\\
1-|\sigma_1 v^*u|_{\mathrm{top}(0)}^2&Y(\textrm{do.})=\{\lambda_1^{\sigma_1 v^*u}\}\\
1&Y(\textrm{do.})=\{\lambda_0^{\sigma_1 v^*u},\lambda_1^{\sigma_1 v^*u}\}
\end{cases}\\
\mu_\rho(Y, V^2_1(\mathrm{id}_2))
&=
\begin{cases}
0\qquad\qquad\qquad&Y(V^2_1(\mathrm{id}_2))\not\ni \lambda_0^{\mathrm{id}_2}\\
1&Y(V^2_1(\mathrm{id}_2))\ni\lambda_0^{\mathrm{id}_2}
\end{cases}
\end{align*}
where $|\sigma_1 v^*u|_{\mathrm{top}(0)}^2$ is equal to $1-|v^*u|_{\mathrm{top}(0)}^2$ though precisely the relevant propositions concern different contexts.
\end{enumerate} 
\label{1-qubit}
\end{Proposition}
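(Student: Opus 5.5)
The plan is to verify the three parts of the statement in turn: the action of $\flat$, the prior probabilities, and the two posterior cases, each by direct matrix computation on the $2\times 2$ diagonal contexts.

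\emph{The map $\flat$.} Since $T\subseteq D$ and $D$ meets each annulus fiber in exactly two points, each context $V^2_1(w)$ has at most the representatives $\diag_w(d_0,d_1)$ and $\diag_w(d_1,d_0)$ in $T$. Its generated algebra is $((T\cup T^*)\cap V)''$. Removing the sole representative $\diag_v(e_0,e_1)$ leaves $(S\cup S^*)\cap V^2_1(v)$ consisting only of scalars, because $e_0\neq e_1$ and no other non-scalar element of $D$ lies in that context. Hence $\flat V^2_1(v)=V^1_1$. For $V^2_1(u)$ with $\diag_u(d_0,d_1)\in S$, the non-scalar element $\diag_u(d_0,d_1)$ together with $\mathrm{id}_2$ already generates all of $V^2_1(u)$. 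Hence $\flat V^2_1(u)=V^2_1(u)$.

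\emph{Prior probabilities.} I use $\mu_\rho(X,V)=\tr(\rho\,\mu(X,V))$ with $\rho=\diag(1,0)$. The projection dual to $\{\lambda_0^u\}$ is $u\,\diag(1,0)\,u^*$, so its probability is $|u_{00}|^2=|u|_{\mathrm{top}(0)}^2$. The projection dual to $\{\lambda_1^u\}$ is the complementary one, and the empty set and the full set give $0$ and $1$. The same computation applies with $v$ in place of $u$.

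\emph{Posterior cases.} I apply the Theorem of \S3.2 with $H_1=\Cpx$, $H_2=\Cpx^2$ and $m=2$. Assumptions (i) to (iii) hold trivially: $V^2_1(v)$ is the only complementary context, it is diagonalized by $v$, and $\rho_1=1$. The Theorem says $v$ is determined up to permutation and phase of its columns. These two permutations are realized by conjugation with $v^*$ or $\sigma_1v^*$, and they correspond to the outcomes $\lambda_0^v$ and $\lambda_1^v$ respectively, with probabilities equal to the prior values computed above.

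\emph{Case (i).} Conjugating by $v^*$ sends $V^2_1(w)$ to $V^2_1(v^*w)$, so $V^2_1(v)\mapsto V^2_1(\mathrm{id}_2)$. Recomputing $\tr(\rho\,\cdot)$ gives $|v^*u|_{\mathrm{top}(0)}^2$ for the other contexts. On $V^2_1(\mathrm{id}_2)$ the projection dual to $\lambda_0$ is $\diag(1,0)=\rho$, which yields the $0/1$ truth values. A proposition $Y$ is a $\flat$-proposition precisely when $Y(V^2_1(\mathrm{id}_2))$ is the preimage of a clopen subset of the one-point set $\Sigma(V^1_1)$. That preimage is either $\emptyset$ or all of $\Sigma(V^2_1(\mathrm{id}_2))$.

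\emph{Case (ii).} The computation is identical with $\sigma_1v^*$ in place of $v^*$, using $\sigma_1V^2_1(\mathrm{id}_2)\sigma_1=V^2_1(\mathrm{id}_2)$ with $\lambda_0$ and $\lambda_1$ swapped. The identity $|\sigma_1v^*u|_{\mathrm{top}(0)}^2=1-|v^*u|_{\mathrm{top}(0)}^2$ follows because left multiplication by $\sigma_1$ swaps the rows and the columns of a unitary matrix have unit norm.

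The main obstacle I expect is the $\flat$ computation. It depends on the double-section property and on the context identification $V^2_1(u)=V^2_1(u\sigma_1)$, which together guarantee that no other element of $S$ can lie in $V^2_1(v)$. I would handle this by noting that any such element would be a second point of $D$ in the fiber of $\diag_v(e_0,e_1)$, and this contradicts the hypothesis that $\diag_v(e_0,e_1)$ is the sole representative of $V^2_1(v)$ in $T$.
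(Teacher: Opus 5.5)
Your proposal is correct and follows essentially the paper's argument. The paper's proof tabulates the outer estimation, daseinization and man $\mu(X,V^2_1(u))\in\{0,\diag_u(1,0),\diag_u(0,1),\mathrm{id}_2\}$, evaluates $\tr(\diag(1,0)\diag_u(1,0))=|u|_{\mathrm{top}(0)}^2$ via an explicit real parametrization of $u$ where you read off $|u_{00}|^2$ directly, and treats the posterior formulas as the same computation with $v^*u$ or $\sigma_1v^*u$ in place of $u$.

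Your checks of $\flat$, of the row-swap identity and of the fit with the theorem of \S 3.2 supply what the paper leaves implicit. Two small corrections. First, $\flat V^2_1(v)=V^1_1$ rests on the sole-representative hypothesis for $T$; the section $D$ itself does have a second point in that fiber, contrary to your wording ``no other non-scalar element of $D$''. Second, $Y$ is a $\flat$-proposition exactly when $Y(V)=r_V^{-1}(Y(V^1_1))$ for $V=V^2_1(\mathrm{id}_2)$, so the choice between $\emptyset$ and the whole fiber must also agree with $Y(V^1_1)$.
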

\begin{Remark}
In conventional Bayesian statistics, updating probabilities is merely a matter of interpretation. However, our localization posits that observing data triggers an essential system change. Even so, putting $p=|v|_{\mathrm{top}(0)}^2$, $q=|u|_{\mathrm{top}(0)}^2$, and $P=|v^*u|_{\mathrm{top}(0)}^2$, we can interpret $P$ as a conditional probability once the sampling order is fixed. Let $Q$ be the probability of sampling the event with prior probability $p$ first. The sum of the probabilities $QpP+(1-Q)qP$ and $Qp(1-P)+(1-Q)(1-q)(1-P)$ is $p$ iff $Q=1$ for generic $u,v$. Thus the sampling order must be specified before entering the stochastic process. 
\end{Remark}
\begin{proof} The outer estimation, the daseinization, and the man are
\begin{align*}
(\mathrm{O},\delta)_{V^2_1(u)}(\pi)
&=
\begin{cases}
(0, \emptyset) & \pi =0\\ 
(\diag_u(1,0), \{\lambda_0^u\})& \pi=\diag_u(1,0)\\
(\diag_u(0,1), \{\lambda_1^u\})& \pi=\diag_u(0,1)\\
(\mathrm{id}_2, \{\lambda_0^u,\lambda_1^u\})& \textrm{otherwise}.
\end{cases}\\
\mu(X, V^2_1(u))
&=
\begin{cases}
0&X(V^2_1(u))=\emptyset\\ 
\diag_u(1,0)&X(V^2_1(u))=\{\lambda_0^u\}\\
\diag_u(0,1)&X(V^2_1(u))=\{\lambda_1^u\}\\
\mathrm{id}_2&X(V^2_1(u))=\{\lambda_0^u,\lambda_1^u\}.
\end{cases}
\end{align*}
Using real parameters we can write 
\[ 
u=\exp(i\varphi)\begin{pmatrix}
\cos\theta\exp(i\chi)& i\sin\theta \exp(i\psi) \\
i\sin\theta\exp(-i\psi)& \cos\theta \exp(-i\chi) 
\end{pmatrix}.
\]
Then a straightforward calculation shows
\[
\tr(\diag(1,0)\diag_u(1,0))=\cos^2\theta=|u|_{\mathrm{top}(0)}^2.
\]
We notice that $V_1^2(\sigma_1v^*u)$ is different from $V_1^2(v^*u)$ in general. 
\end{proof}

\subsection{$2$-qubit case}
This subsection discusses the case where $n=2$. 

First, we describe the minimal poset of contexts containing the Pauli matrices, which consists of the trivial context $V^1_2=\Cpx\textrm{id}_4$ and 
\begin{align*}
V^2_2(j,k)&=\langle\textrm{id}_4, \sigma_j\otimes \sigma_k\rangle&& (j,k) \neq (0,0),\\
V^4_2(j,k)&=\langle\textrm{id}_4, \sigma_j\otimes\sigma_0, \sigma_0\otimes\sigma_k, \sigma_j\otimes \sigma_k \rangle&&  j,k\neq 0,\\
V^4_2(j,k,l)&=\langle\textrm{id}_4, \sigma_j\otimes\sigma_1, \sigma_k\otimes\sigma_2, \sigma_l\otimes\sigma_3\rangle&& \{j,k,l\}=\{1,2,3\}
\end{align*}
where $\langle\cdot\rangle$ denotes the span. There are forty-five non-trivial inclusions
\begin{align*}
&V^2_2(j,0), V^2_2(0,k), V^2_2(j,k)\subseteq V^4_2(j,k) &&j,k\neq 0,\\
&V^2_2(j,1), V^2_2(k,2), V^2_2(l,3)\subseteq V^4_2(j,k,l)&&\{j,k,l\}=\{1,2,3\}
\end{align*}
This is the incident poset of the generalized quadrangle $\mathrm{GQ}(2,2)$; the $4$-dimensional contexts are the fifteen lines each including three points; the $2$-dimensional contexts are the fifteen points each on three lines. Note that, in a proper quadrangle, each side is defined by two of four vertices and each vertex is on two sides (dually to the quadrilateral).  
For nonzero $j$, choose $u(j)\in C_1/\mathrm{U}(1)$ with $u(j)\sigma_3u(j)^*=\sigma_j$ and put $u(j,k)=u(j)\otimes u(k)$. These give the nine contexts with product eigenbases. The stabilizer in $C_2/\mathrm{U}(1)$ of the diagonal poset
\[
\{V^1_2, V^2_2(0,3), V^2_2(3,0), V^2_2(3,3), V^4_2(3,3)\}
\] 
has order $768$. Hence $768\times(3^2-1)$ projective Clifford operators send its maximal context to one of the other eight product contexts. The identity $cV^4_2(2,2)c^*=V^4_2(2,3,1)$ illustrates the remaining six contexts, which have Bell eigenbases. We can choose $u(j,k,l)\in C_2$ such that 
\begin{align*}
V^2_2(j,1)&=u(j,k,l)V^2_2(3,0)u(j,k,l)^*,\\
V^2_2(k,2)&=u(j,k,l)V^2_2(0,3)u(j,k,l)^*,\\
V_2^2(l,3)&=u(j,k,l)V^2_2(3,3)u(j,k,l)^*,\\
V^4_2(j,k,l)&=u(j,k,l)V^4_2(3,3)u(j,k,l)^*
\end{align*} 
for each permutation $(j,k,l)$ of $(1,2,3)$. There are $768\times3!$ projective Clifford operators with these six target contexts. For any of the fifteen choices $u=u(j,k)$ or $u(j,k,l)$, we have
\[
uV^2_2(0,3)u^*, uV^2_2(3,0)u^*, uV^2_2(3,3)u^*\subseteq uV^4_2(3,3)u^*.
\]
Namely, the incident poset has the fifteen building blocks each of which is a unitary conjugation of the above diagonal poset.
 
Next, to consider a general choice of $T$ and $S$, we use the larger diagonal poset $\mathcal{P}\subset \V_A$ consisting of the fifteen elements
\begin{align*}
&V^1_2=\Cpx\mathrm{id}_4,\\
&V^4_2(3,3)=\{\diag(x_0,x_1,x_2,x_3)\mid x_0,x_1,x_2,x_3\in\Cpx\},\\ 
&V^2_2(3,0),~~ V^2_2(0,3),~~ V^2_2(3,3),\\
&V^2_2(0)=\{x_1=x_2=x_3\},~ V^2_2(1)=\{x_0=x_2=x_3\},\\
&V^2_2(2)=\{x_0=x_1=x_3\},~ V^2_2(3)=\{x_0=x_1=x_2\},\\
&V^3_2(0,2)=\{x_0=x_2\}, V^3_2(0,1)=\{x_0=x_1\}, V^3_2(0,3)=\{x_0=x_3\},\\
&V^3_2(1,3)=\{x_1=x_3\}, V^3_2(2,3)=\{x_2=x_3\}, V^3_2(1,2)=\{x_1=x_2\}.
\end{align*}
Each building block of $\V$ is a unitary conjugation of a subposet of $\mathcal{P}$. 

In the case where $\dim H_2\geq 3$, the localizability implies that $\dim H_2=4$ and the poset $\overline{\flat(\V)}$ includes a conjugation of $V_2^4(3,3)$. 

\begin{Proposition} In the $2$-qubit updating, suppose that $\overline{\flat(\V)}$ contains $vV_2^4(3,3)v^*$ for a $2$-qubit unitary matrix $v\in \mathrm{U}(4)$. Any building block of $\flat(\V)$ is a subset of $u\mathcal{P}u^*$ for some $u\in\mathrm{U}(4)$ and has the probabilities
\begin{align*}
&\tr(\rho\diag_u(1,0,0,0))=|u|_{\mathrm{top}(0)}^2,~~
\tr(\rho\diag_u(0,1,0,0))=|u|_{\mathrm{top}(1)}^2,\\
&\tr(\rho\diag_u(0,0,1,0))=|u|_{\mathrm{top}(2)}^2,~~
\tr(\rho\diag_u(0,0,0,1))=|u|_{\mathrm{top}(3)}^2\\
&(\textrm{and therefore $\tr(\rho\diag_u(1,0,1,0))=|u|_{\mathrm{top}(0)}^2+|u|_{\mathrm{top}(2)}^2$ etc.})
\end{align*}
which are updated  to 
$|v^*u|_{\mathrm{top}(0)}^2,\dots,$
$|v^*u|_{\mathrm{top}(3)}^2$
respectively under the data with probability $|v|_{\mathrm{top}(0)}^2$ where $|\cdot|_{\mathrm{top}(j)}^2$ denotes the squared absolute value of the $j$-component of the top row of a unitary matrix. 
\end{Proposition}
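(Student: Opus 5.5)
The proof will mirror the 1-qubit Proposition~\ref{1-qubit}, with three steps: locate each building block inside a conjugate of $\mathcal{P}$, compute the prior probabilities, and compute the posterior probabilities after the localization of \S3.2.

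\emph{Building blocks.} By the discussion preceding the statement, every building block of $\V$ is a unitary conjugate of a subposet of $\mathcal{P}$. The map $\flat$ sends a context $V$ to the von Neumann algebra generated by elements of $V$. Such an algebra is again a commutative subalgebra simultaneously diagonalized by the same $u$. So $\flat V$ is obtained from $V$ by merging further eigenvalue blocks, i.e.\ by imposing more equalities among $x_0,\dots,x_3$. Every partition pattern of four diagonal entries into at most the allowed blocks already appears in $\mathcal{P}$ up to the permutations realized by conjugating $u$ with permutation matrices. Hence each building block of $\flat(\V)$ lies in $u\mathcal{P}u^*$ for some $u\in\mathrm{U}(4)$. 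The point needing care is that the partitions of type $2+2$ not listed in $\mathcal{P}$ are absorbed by reordering the columns of $u$. I would check this by listing the set partitions of $\{0,1,2,3\}$.

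\emph{Prior probabilities.} Take $\rho=\diag(1,0,0,0)$. Then
\[
\tr(\rho\,u\,\diag(e_j)\,u^*)=(u\,\diag(e_j)\,u^*)_{00}=|u_{0j}|^2=|u|_{\mathrm{top}(j)}^2 .
\]
Every projection in a context of $u\mathcal{P}u^*$ is $\diag_u$ of a $0/1$ vector. By linearity of the trace, its probability is the corresponding sum, for instance $|u|_{\mathrm{top}(0)}^2+|u|_{\mathrm{top}(2)}^2$. By the Remark after the definition of contextual probability, the man $\mu(X,V)$ is exactly such a projection, so these formulas give $\mu_\rho$.

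\emph{Posterior probabilities.} By hypothesis, $\overline{\flat(\V)}$ contains $vV_2^4(3,3)v^*$, and localizability forces $H_2=\Cpx^4$ with $H_1$ trivial. The updating Theorem of \S3.2 then replaces $S,T$ by $v^*Sv,\ v^*Tv$. This diagonalizes the maximal context of the complementary poset. Its column choice is fixed up to permutation and phase, and the sampling outcome with probability $|v|_{\mathrm{top}(0)}^2$ corresponds to the first column of $v$ staying first. After the conjugation, a building block $u\mathcal{P}u^*$ becomes $(v^*u)\mathcal{P}(v^*u)^*$. Repeating the prior computation with $v^*u$ in place of $u$ gives $|v^*u|_{\mathrm{top}(j)}^2$. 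The other outcomes would use $\tau v^*$ for a permutation matrix $\tau$, as with $\sigma_1v^*$ in the 1-qubit case, but the statement only asks for the first.

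I expect the main obstacle to be the first step: justifying cleanly that $\flat$ stays inside a single conjugate $u\mathcal{P}u^*$, including the permutation bookkeeping. The probability computations are routine.
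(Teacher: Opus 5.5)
Your proposal is correct and follows essentially the paper's route: represent $X(V)$ inside $\Sigma(V_2^4(u))$ so that the man is $\diag_u(\chi)$, read off $\tr(\rho\,\diag_u(\chi))$ as the corresponding sum of $|u|^2_{\mathrm{top}(j)}$, and conjugate by $v^*$ to obtain $\diag_{v^*u}(\chi)$. The step you flagged as the main obstacle is in fact immediate, because $\mathcal{P}$ already lists all fifteen set partitions of $\{0,1,2,3\}$, including the three of type $2+2$ ($V^2_2(3,0)$, $V^2_2(0,3)$, $V^2_2(3,3)$), so every unital $^*$-subalgebra of a context diagonalized by $u$ lies in $u\mathcal{P}u^*$ with no permutation bookkeeping; this includes $\flat V$, which is generated by $(S\cup S^*)\cap V$ rather than by all of $V$.
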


\begin{proof}
Although the proposition does not specify for which proposition and in what context the probability applies, we clarify this point. 
For a proposition $X$ and $V\in u\mathcal{P}u^*\cap\V$, we represent $X(V)$ as a subset of 
\[
\Sigma_A(V_2^4(u))=\left\{\begin{array}{l}
\lambda^u_{0}: \diag_{u}(x_0,x_1,x_2,x_3)\mapsto x_0,\\
\lambda^u_{1}: \diag_{u}(x_0,x_1,x_2,x_3)\mapsto x_1,\\
\lambda^u_{2}: \diag_{u}(x_0,x_1,x_2,x_3)\mapsto x_2,\\
\lambda^u_{3}: \diag_{u}(x_0,x_1,x_2,x_3)\mapsto x_3
\end{array}\right\}.
\]
The man $\mu(X,V)$ is then the projection $\diag_u(\chi)$ for the characteristic function $\chi$ of the subset $X(V)$. For example, we have
\[
\mu(\{\lambda_0^u,\lambda_2^u\},V_2^4(u))=\diag_u(1,0,1,0)~ (\chi
=\begin{cases}
1 &\lambda_j^u\in  \{\lambda_0^u,\lambda_2^u\}\\
0 &\lambda_j^u\not\in \{\lambda_0^u,\lambda_2^u\}
\end{cases}).
\]
Conjugation by $v^*$ gives $\mu(Y,V)=\diag_{v^*u}(\chi)$ for $V\in v^*u\mathcal{P}u^*v\cap v^*\V v$, where $\chi$ is the characteristic function of the inverse image representing $Y(V)$.
Excluding the sampled context $V_2^4(\mathrm{id}_4)$ from prediction means that $\flat$-propositions distinguish no individual sampling outcomes: their restrictions to the subposet below it are either empty or whole.
\end{proof}

\begin{Example}
Take the set $T=\{\sigma_j\otimes\sigma_k\mid 0\leq j,k\leq3\}$ of Pauli representatives and put $S=T\setminus\{\sigma_0\otimes \sigma_1, \sigma_1\otimes \sigma_0, \sigma_1\otimes \sigma_1\}$. The set $T$ yields the same context poset as $P_2$. Then the complementary poset is 
\begin{align*}
\overline{\flat(\V)}
&=\{V_2^1, V_2^2(0,1), V_2^2(1,0), V_2^2(1,1), V_2^4(1,1)\}\\
&=h\otimes h\{V_2^1, V_2^2(0,3), V_2^2(3,0), V_2^2(3,3), V_2^4(3,3)\}h\otimes h
\end{align*}
and we represent the four sampling outcomes by $v^*=h\otimes h, h\otimes(\sigma_1h), (\sigma_1h)\otimes h, (\sigma_1h)\otimes(\sigma_1h)$ 
with probabilities $\displaystyle |v|_{\mathrm{top}(j)}^2=\frac{1}{4},~ \frac{1}{4},~ \frac{1}{4},~ \frac{1}{4}$, respectively. 
\begin{enumerate}

\item[(i)] 
Take $u=h\otimes sh$ as an example and consider the subposet 
\begin{align*}
u\{V_2^1, V_2^2(0,3), V_2^2(3,3), V_2^4(3,3)\}u^*&\\
=\{V_2^1, V_2^2(0,2), V_2^2(1,2), V_2^4(1,2)\}&\subseteq \flat(\V)
\end{align*}
with the prior probability distribution
\[
|u|_{\mathrm{top}(j)}^2=\frac{1}{4},~ \frac{1}{4},~ \frac{1}{4},~ \frac{1}{4} \quad (j=0,1,2,3).
\]
If $v^*=h\otimes h$ is chosen with probability $\displaystyle \frac{1}{4}$, the subposet becomes
\begin{align*}
v^*u\{V_2^1, V_2^2(0,3), V_2^2(3,3), V^4_2(3,3)\}u^*v&\\
=\{V_2^1, V_2^2(0,2), V_2^2(3,2), V^4_2(3,2)\}&\subseteq v^*\flat(\V)v
\end{align*}
with the posterior probability distribution
\[
|v^*u|_{\mathrm{top}(j)}^2=\frac{1}{2},~ \frac{1}{2},~ 0,~ 0\quad (j=0,1,2,3)
\]
because $\displaystyle v^*u=\mathrm{id}_2\otimes \frac{1}{\sqrt{2}}\begin{pmatrix}
\exp(i\pi/4)& \exp(-i\pi/4)\\
\exp(-i\pi/4)&\exp(i\pi/4)
\end{pmatrix}$. 

\item[(ii)] 
Take $u=c(h\otimes \mathrm{id}_2)$ as another example and consider 
\begin{align*}
&u\{V_2^1, V_2^2(0,3), V_2^2(3,3), V_2^4(3,3)\}u^*\\
&=\{V_2^1, V_2^2(2,2), V_2^2(3,3), V_2^4(1,2,3)\}\subseteq \flat(\V)
\end{align*}
with the prior probability distribution
\[
|u|_{\mathrm{top}(j)}^2=\frac{1}{2},~ 0,~ \frac{1}{2},~ 0 \quad (j=0,1,2,3).
\]
If $v^*=h\otimes h$ with probability $\displaystyle\frac{1}{4}$, we have
\begin{align*}
&v^*u\{V_2^1, V_2^2(0,3), V_2^2(3,3), V^4_2(3,3)\}u^*v\\
&=\{V_2^1, V_2^2(1,1), V_2^2(2,2), V_2^4(1,2,3)\}\subseteq v^*\flat(\V)v
\end{align*}
with the posterior probability distribution
\[
|v^*u|_{\mathrm{top}(j)}^2=\frac{1}{2},~\frac{1}{2},~ 0,~ 0, \quad (j=0,1,2,3).
\]

\end{enumerate}
Regardless of the sampling result, there is no nontrivial diagonal context in the new poset $v^*\flat(\V)v$.
\label{hh}
\end{Example}

In the case where $\dim H_2=2$, the localizability implies that the poset $\overline{\flat(\V)}$ is a conjugation of the subposet $\{V^1_2,~ V^2_2(0,3)\}\subseteq \mathcal{P}$.  

\begin{Proposition} In the $2$-qubit updating, suppose $\dim H_2=2$ and 
\[
\overline{\flat(\V)}=\{V^1_2, V^1_1\otimes V^2_1(v)\}
\]
for a $1$-qubit unitary matrix $v\in \mathrm{U}(2)$. 
For a building block of $\flat(\V)$ included in $u\mathcal{P}u^*$, the probabilities $|u|_{\mathrm{top}(0)}^2$, $|u|_{\mathrm{top}(1)}^2$, $|u|_{\mathrm{top}(2)}^2$, $|u|_{\mathrm{top}(3)}^2$ are updated to 
$|\widetilde{v}^*u|_{\mathrm{top}(0)}^2$,
$|\widetilde{v}^*u|_{\mathrm{top}(1)}^2$,
$|\widetilde{v}^*u|_{\mathrm{top}(2)}^2$,
$|\widetilde{v}^*u|_{\mathrm{top}(3)}^2$.
If the matrix $u$ splits as $u_1\otimes u_2$, then these probabilities are 
\begin{align*}
&|u_1|_{\mathrm{top}(0)}^2|u_2|_{\mathrm{top}(0)}^2,&&
|u_1|_{\mathrm{top}(0)}^2(1-|u_2|_{\mathrm{top}(0)}^2),\\
&(1-|u_1|_{\mathrm{top}(0)}^2)|u_2|_{\mathrm{top}(0)}^2,&&
(1-|u_1|_{\mathrm{top}(0)}^2)(1-|u_2|_{\mathrm{top}(0)}^2)
\end{align*}
and updated to 
\begin{align*}
&|u_1|_{\mathrm{top}(0)}^2|v^*u_2|_{\mathrm{top}(0)}^2,&&
|u_1|_{\mathrm{top}(0)}^2(1-|v^*u_2|_{\mathrm{top}(0)}^2),\\
&(1-|u_1|_{\mathrm{top}(0)}^2)|v^*u_2|_{\mathrm{top}(0)}^2,&&
(1-|u_1|_{\mathrm{top}(0)}^2)(1-|v^*u_2|_{\mathrm{top}(0)}^2).
\end{align*}
\label{2dim}
\end{Proposition}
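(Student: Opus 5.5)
We follow the proof of the previous proposition, now for the partial localization $\widetilde{v}=\mathrm{id}_2\otimes v$ provided by assumption (ii) of the localizability hypothesis. First we make the probabilities explicit. For a proposition $X$ and $V\in u\mathcal{P}u^*\cap\V$, we represent $X(V)$ as a subset of $\Sigma_A(V^4_2(u))=\{\lambda^u_0,\dots,\lambda^u_3\}$, exactly as before. Then the man is $\mu(X,V)=\diag_u(\chi)$ for the characteristic function $\chi$ of that subset. Since $\rho=\diag(1,0,0,0)$, we get
\[
\tr(\rho\,\diag_u(\chi))=\sum_j\chi_j|u|_{\mathrm{top}(j)}^2 .
\]
So it suffices to track the four numbers $|u|_{\mathrm{top}(j)}^2$.

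Next comes the update. We substitute $\widetilde{v}^*S\widetilde{v}$ and $\widetilde{v}^*T\widetilde{v}$ for $S$ and $T$. Each building block $u\mathcal{P}u^*$ is then carried to $\widetilde{v}^*u\mathcal{P}u^*\widetilde{v}$. The man of the corresponding proposition $Y$ becomes $\diag_{\widetilde{v}^*u}(\chi)$. The same trace computation with $\rho$ unchanged gives the updated values $|\widetilde{v}^*u|_{\mathrm{top}(j)}^2$, which proves the first claim. We also check the data side. The sampled context $\widetilde{v}^*(V^1_1\otimes V^2_1(v))\widetilde{v}=V^1_1\otimes V^2_1(\mathrm{id}_2)=V^2_2(0,3)$ is diagonal, and $\tr(\rho_2\,\diag(1,0))=1$, so the outcome is truth value $1$. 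The other outcome is represented by $\sigma_1v^*$ in place of $v^*$, as in Proposition~\ref{1-qubit}.

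For the split case $u=u_1\otimes u_2$, the key step is the Kronecker product rule: the top row of $u_1\otimes u_2$ is the tensor product of the top rows of $u_1$ and $u_2$. With the index ordering $j=2a+b$, this gives
\[
|u|_{\mathrm{top}(2a+b)}^2=|u_1|_{\mathrm{top}(a)}^2\,|u_2|_{\mathrm{top}(b)}^2 .
\]
Unitarity of the $2\times2$ matrices gives $|u_i|_{\mathrm{top}(1)}^2=1-|u_i|_{\mathrm{top}(0)}^2$, which yields the four products in the statement. Finally,
\[
\widetilde{v}^*u=(\mathrm{id}_2\otimes v^*)(u_1\otimes u_2)=u_1\otimes v^*u_2
\]
by the mixed-product rule recalled in \S2. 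The same formula with $u_2$ replaced by $v^*u_2$ then gives the updated list.

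The main obstacle is bookkeeping rather than computation. We must make sure that the ordering of $\lambda^u_j$ matches the Kronecker index convention $j=2a+b$, with $H_1$ the first factor and $H_2$ the second. We also need the representation of $X(V)$ for non-maximal $V$, namely its inverse image in $\Sigma_A(V^4_2(u))$, to be compatible with the conjugation. For the latter we would argue that conjugation commutes with restriction, so the characteristic function $\chi$ is transported unchanged.
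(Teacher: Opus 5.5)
Your proposal is correct and follows essentially the same route as the paper. Like the paper, you represent $X(V)$ inside $\Sigma_A(V^4_2(u))$, write the man as $\diag_u(\chi)$, read the probability off the top row of $u$, and transport by conjugation with $\widetilde{v}^*$ to obtain $\diag_{\widetilde{v}^*u}(\chi)$; your explicit Kronecker top-row computation with $\widetilde{v}^*u=u_1\otimes v^*u_2$ simply spells out what the paper obtains by citing Proposition~\ref{1-qubit} for the product formulas.
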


\begin{proof}
For a proposition $X$ and a context $V$ in the chosen block, we represent $X(V)$ as a subset of 
\[
\Sigma_A(V_2^4(u))=\left\{\begin{array}{l}
\lambda^u_{0}: \diag_{u}(x_0,x_1,x_2,x_3)\mapsto x_0,\\
\lambda^u_{1}: \diag_{u}(x_0,x_1,x_2,x_3)\mapsto x_1,\\
\lambda^u_{2}: \diag_{u}(x_0,x_1,x_2,x_3)\mapsto x_2,\\
\lambda^u_{3}: \diag_{u}(x_0,x_1,x_2,x_3)\mapsto x_3
\end{array}\right\}
\]
and on the complementary poset we use 
\[
\Sigma(V^1_1\otimes V^2_1(v))=\left\{\begin{array}{l}
\lambda^v_{\bullet 0}: \mathrm{id}\otimes \diag_{v}(x_0,x_1)\mapsto x_0,\\
\lambda^v_{\bullet 1}: \mathrm{id}\otimes \diag_{v}(x_0,x_1)\mapsto x_1\\
\end{array}\right\}.
\]
In the former case, the man $\mu(X,V)$ is the projection $\diag_u(\chi)$; in the latter case, $\mu(X,V)=\mathrm{id}_2\otimes \diag_v(\chi)$. In each case, the prior probability is the sum of the squared absolute values of the top entries corresponding to the support of $\chi$. After conjugation by $\widetilde{v}^*$, the same construction gives $\mu(Y,V)=\diag_{\widetilde{v}^*u}(\chi)$ on the transported block. The product formulas follow from Proposition \ref{1-qubit}.
Excluding the sampled context $V^1_1\otimes V^2_1(\mathrm{id}_2)$ from prediction means that $\flat$-propositions distinguish no individual sampling outcomes: their restrictions to the subposet below it are either empty or whole. 
\end{proof}

\begin{Remark}
While it is difficult to calculate probabilities in a general form for the non-split case, the calculations for each individual case are straightforward. In quantum computing, the probability for the non-split or entangled case is of particular importance.
\end{Remark}

\begin{Example}
Take the same set $T=\{\sigma_j\otimes\sigma_k\mid 0\leq j,k\leq3\}$ of Pauli representatives and put $S=T\setminus\{\sigma_0\otimes \sigma_1\}$. Then 
\[
\overline{\flat(\V)}=\{V_2^1, V_2^2(0,1)\}=\mathrm{id}_2\otimes h\{V_2^1, V_2^2(0,3)\}\mathrm{id}_2\otimes h
\]
and we represent the two sampling outcomes by $v^*=h$ or $\sigma_1h$ 
with probabilities $\displaystyle |v|_{\mathrm{top}(j)}^2=\frac{1}{2},~\frac{1}{2}$. 

\begin{enumerate}

\item[(i)] 
Take $u=h\otimes sh$ and consider again the subposet 
\[
\{V_2^1, V_2^2(0,2), V_2^2(1,2), V_2^4(1,2)\}\subseteq \flat(\V)
\] 
with the prior probability distribution
\[
|u|_{\mathrm{top}(j)}^2=\frac{1}{4},~ \frac{1}{4},~ \frac{1}{4},~ \frac{1}{4} \quad (j=0,1,2,3).
\]
If $v^*=h$ is chosen with probability $\displaystyle\frac{1}{2}$, the subposet and 
the probability are preserved ($|\widetilde{v}^*u|_{\mathrm{top}(j)}^2=|u|_{\mathrm{top}(j)}^2$). 

\item[(ii)] 
Take $u=c(h\otimes \mathrm{id}_2)$ and consider again 
\[
\{V_2^1, V_2^2(1,1), V_2^2(2,2), V_2^2(3,3), V_2^4(1,2,3)\}\subseteq \flat(\V)
\]
with the prior probability distribution
\[
|u|_{\mathrm{top}(j)}^2=\frac{1}{2},~ 0,~ \frac{1}{2},~ 0 \quad (j=0,1,2,3).
\]
If $v^*=h$ is chosen with probability $\displaystyle\frac{1}{2}$, the subposet becomes 
\[
\{V_2^1, V_2^2(1,3), V_2^2(2,2), V_2^2(3,1), V_2^4(3,2,1)\}\subseteq \widetilde{v}^*\flat(\V)\widetilde{v}
\]
with the posterior probability distribution
\[
|\widetilde{v}^*u|_{\mathrm{top}(j)}^2=\frac{1}{4},~ \frac{1}{4},~ \frac{1}{4},~ \frac{1}{4} \quad (j=0,1,2,3).
\]

\end{enumerate}
For any sampling results, there remains a diagonal context in the new poset, indicating that Example \ref{hh} shows simultaneous observation.
\label{ih}
\end{Example}

\section{Conclusions}
Quantum theory has traditionally treated the change associated with measurement as a sudden jump within the theory. In this paper, we have formalized the change as a dynamic shift in the logical boundaries of the theory itself. Through the integration of Nakayama's context-selection and the idea of Bayesian updating, we described how a predictable variable becomes a given constant through a sampling.

The arrow of time here refers to the exclusion of the determined context from subsequent prediction.  We calculated some 2-qubit gates to illustrate how the framework can track the structural evolution in an actual computation. A related contact/symplectic topological approach to quantum contexts and the choice of subsystems, using planar open-books, is developed in \cite{MoriPlanar}.


\begin{thebibliography}{99}
\bibitem{DI1} A.~D\"oring and C.J.~Isham, ``A topos foundation for theoretical physics: I. Formal language for physics'', J. Math. Phys. 49, 053515 (2008). https://doi.org/10.1063/1.2883740
\bibitem{DI2} A.~D\"oring and C.J.~Isham, ``A topos foundation for theoretical physics: II. Daseinization and the liberation of quantum theory,'' J. Math. Phys. 49, 053516 (2008). https://doi.org/10.1063/1.2883742
\bibitem{DI3} A.~D\"oring and C.J.~Isham, ``A topos foundation for theoretical physics: III. The representation of physical quantities with arrows,'' J. Math. Phys. 49, 053517 (2008). https://doi.org/10.1063/1.2883777
\bibitem{DI4} A.~D\"oring and C.J.~Isham, ``A topos foundation for theoretical physics: IV. Categories of systems'', J. Math. Phys. 49, 053518 (2008). https://doi.org/10.1063/1.2883826
\bibitem{D5} A.~D\"oring, ``Quantum states and measures on the spectral presheaf'', Adv. Sci. Lett. 2, 291--301 (2009). https://doi.org/10.1166/asl.2009.1037
\bibitem{DI6} A.~D\"oring and C.J~Isham, ``Classical and quantum probabilities as truth values'', J. Math. Phys. 53, 032101 (2012). https://doi.org/10.1063/1.3688627
\bibitem{Nakayama1} K.~Nakayama, ``Topologies on quantum topoi induced by quantization'', J. Math. Phys. 54, 072102 (2013). https://doi.org/10.1063/1.4813960
\bibitem{Nakayama2} K.~Nakayama, ``Topos quantum theory on quantization induced sheaves'', J. Math. Phys. 55, 102103 (2014). https://doi.org/10.1063/1.4898185
\bibitem{Nakayama3} K.~Nakayama, ``Topos quantum theory reduced by context-selection functors'', J. Math. Phys. 57, 122103 (2016).  https://doi.org/10.1063/1.4972215

\bibitem{Critique} A.~Khrennikov, ``Ozawa's intersubjectivity theorem as objection to QBism individual agent perspective'', Int. J. Theor. Phys. 63, 23 (2024). https://doi.org/10.1007/s10773-024-05552-8

\bibitem{NC} M.A.~Nielsen, I.L.~Chuang, ``Quantum computation and quantum information: 10th anniversary edition. (Cambridge: Cambridge University Press, 2010) 

\bibitem{Got} D.~Gottesman, ``The Heisenberg representation of quantum computers'', Group22: Proceedings of the XXII International Colloquium on Group Theoretical Methods in Physics, eds. S. P. Corney, R. Delbourgo, and P. D. Jarvis, pp. 32--43 (Cambridge, International Press, 1999)

\bibitem{Boykin} P.O.~Boykin, T.~Mor, M.~Pulver, V.P.~Roychowdhury, F.~Vatan, ``On universal and fault-tolerant quantum computing: a novel basis and a new constructive proof of universality for Shor's basis'', Proc. 40th FOCS, pp. 486--494, 1999.

\bibitem{Takesaki} M.~Takesaki, ``Theory of Operator Algebras'', vol. I, II, III, Springer. Encyclopaedia of Mathematical Sciences 124 (2002), 125, 127 (2003). 

\bibitem{MoriPlanar} A.~Mori, ``Planar contact structures with Calabi-Yau fillings and topological quantum computation'', preprint (2026), arXiv:2609.30406. 

\end{thebibliography}
\end{document}